\pgfplotsset{
compat=1.3,
legend style={font=\footnotesize, fill opacity=0.7,  draw opacity=1, text opacity=1, draw=white!15!black, legend cell align=left, align=left}, 
width=6cm, 
height=6cm,
yminorticks=false,
xminorticks=false,
title style={font=\small},
tick style={color=black},
tick label style={font=\small},
grid style={line width=.1pt, draw=gray!20},
major grid style={line width=.1pt,draw=gray!20},
}
\def\BibTeX{{\rm B\kern-.05em{\sc i\kern-.025em b}\kern-.08em
		T\kern-.1667em\lower.7ex\hbox{E}\kern-.125emX}}
\pgfplotsset{compat=1.15}
\newacronym{af}{AF}{array factor}
\newacronym{urllc}{URLLC}{Ultra-Reliable Low-Latency Communications}
\newacronym{qos}{QoS}{Quality of Service}
\newacronym{pdf}{PDF}{Probability Density Function}
\newacronym{cdf}{CDF}{Cumulative Density Function}
\newacronym{iot}{IoT}{Internet of Things}
\newacronym{los}{LoS}{Line of Sight}
\newacronym{pec}{PEC}{Packet Erasure Channel}
\newacronym[plural=RATs,firstplural=Radio Access Technologies (RATs)]{rat}{RAT}{Radio Access Technology}
\newacronym{simo}{SIMO}{Single-Input Multiple-Output}
\newacronym{bs}{BS}{Base Station}
\newacronym{ris}{RIS}{Reconfigurable Intelligent Surface}
\newacronym{risc}{RISC}{\gls{ris} Controller}
\newacronym{snr}{SNR}{Signal to Noise Ratio}
\newacronym{upa}{UPA}{Uniform Planar Array}
\newacronym{ula}{ULA}{Uniform Linear Array}
\newacronym{ue}{UE}{User Equipment}
\newacronym{csi}{CSI}{Channel State Information}
\newacronym{dl}{DL}{Downlink}
\definecolor{violet}{rgb}{0.6,0,0.6}%
\definecolor{orange_D}{rgb}{1,0.3,0}%
\definecolor{cyan}{rgb}{0,0.67,0.64}%
\definecolor{red}{rgb}{0.9,0,0}%
\definecolor{green_D}{rgb}{0,0.5,0}%
\definecolor{yellow}{rgb}{1,0.8,0}
\definecolor{amaranth}{rgb}{0.9, 0.17, 0.31}
\newcommand{\T}{^{\intercal}}     
\newcommand{\mc}[1]{\mathcal{#1}}   
\newcommand{\mb}[1]{\mathbf{#1}}    
\newtheorem{theorem}{Theorem} 
\newtheorem{proposition}{Proposition}
\definecolor{color1}{HTML}{C9A847}
\definecolor{color2}{HTML}{D3522C}
\definecolor{color3}{HTML}{620000}
\def \mwidth{0.2\linewidth}
\begin{document}

\title{Efficient URLLC with a Reconfigurable Intelligent Surface and Imperfect Device Tracking}


\author{\IEEEauthorblockN{Fabio Saggese\IEEEauthorrefmark{1}, Federico Chiariotti\IEEEauthorrefmark{1}\IEEEauthorrefmark{2}, Kimmo Kansanen\IEEEauthorrefmark{1}\IEEEauthorrefmark{3}, and Petar Popovski\IEEEauthorrefmark{1}}

\IEEEauthorblockA{\IEEEauthorrefmark{1}Department of Electronic Systems, Aalborg University, Denmark (\{fasa, fchi, kimkan, petarp\}@es.aau.dk)}
\IEEEauthorblockA{\IEEEauthorrefmark{2} Department of Information Engineering, University of Padova, Italy}
\IEEEauthorblockA{\IEEEauthorrefmark{3}Department of Electronic Systems, Norwegian University of Science and Technology}\thanks{This work was partly supported by the Villum Investigator grant ``WATER'' from the Villum Foundation, Denmark, and by the Horizon 2020 ``RISE-6G'' project, financed by the European Commission under grant no. 101017011.}}

\maketitle

\begin{abstract}
The use of \gls{ris} technology to extend coverage and allow for better control of the wireless environment has been proposed in several use cases, including \gls{urllc} communications. However, the extremely challenging latency constraint makes explicit channel estimation difficult, so positioning information is often used to configure the \gls{ris} and illuminate the receiver device. In this work, we analyze the effect of imperfections in the positioning information on the reliability, deriving an upper bound to the outage probability. We then use this bound to perform power control, efficiently finding the minimum power that respects the \gls{urllc} constraints under positioning uncertainty. The optimization is conservative, so that all points respect the \gls{urllc} constraints, and the bound is relatively tight, with an optimality gap between 1.5 and 4.5~dB.
\end{abstract}
\begin{IEEEkeywords}
Reconfigurable intelligent surfaces, URLLC, power control, stochastic optimization
\end{IEEEkeywords}

\IEEEpeerreviewmaketitle
\glsresetall

\section{Introduction}\label{sec:intro}
 
A growing body of research has recognized the potential of \glspl{ris} to provide improvements in wireless communications by imposing low-power real-time control on the propagated wireless signals~\cite{Wu2021ristutorial}. As such, it can result in various performance benefits, such as improved coverage, increased data rate, and mitigation of multi-user interference~\cite{Wu2021ristutorial, bjornson2021signalprocessing}. 

Intuitively, a \gls{ris} is a good match for \gls{urllc}, since it can act as a full-duplex relay. It can thus support two-way exchanges without incurring additional latency, as there is no need to change the configuration when the communication direction is changed. Nevertheless, the \gls{ris} configuration needs to be optimized in order to create a favorable wireless channel between the nodes of interest~\cite{bjornson2021signalprocessing}. 
The potential of \gls{ris} in the context of \gls{urllc} has been analyzed in~\cite{Hashemi2021risurllc}, where the authors prove that the reliability of the system is improved by adding an \gls{ris}.
Other works on the topic focus on the optimization of the system: in~\cite{Ghanem2021misoris}, the authors design a low-complexity joint beamforming and phase shift optimization algorithm; in~\cite{Melgarejo2020grantfreeURLLC}, the authors propose a grant-free uplink access paradigm based on a mix of resource allocation schemes and receiver design. However, the above literature is conditioned on the knowledge of the \gls{csi}. In real systems, \gls{ris} channel estimation procedures have a complexity proportional to the number of \gls{ris} elements~\cite{Li2021ce, mengnan2022survey}, and are not practical for \gls{urllc} traffic, as the procedure can add a significant delay. 

A possible alternative, especially in cases where a strong \gls{los} component dominates the propagation, is to use the positioning information of the terminals to configure the \gls{ris} beamformer to illuminate the receiver~\cite{Jamali2022lowoverhead}. However, available position information is subject to uncertainty, and, hence, robust optimizations accounting for such error are needed. A formal analysis on the statistical relation between reliability and uncertainty for a generic communication system is carried out in~\cite{Kallehauge2022tradeoff}. A design of \gls{ris} phase shift optimization taking into account such uncertainty to maximize the average spectral efficiency with no reliability constraint (which makes it unsuitable for \gls{urllc}) is presented in~\cite{Abrardo2021positioning}.

\begin{figure}[t]
    \centering
    \includegraphics[width=7cm]{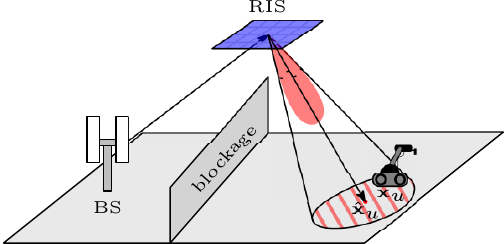}
    \caption{Scenario of interest.}\vspace{-0.4cm}
    \label{fig:scenario}
\end{figure}

This paper addresses the problem of \gls{urllc} communication when the \gls{ris} is configured to beam energy towards a \gls{ue} based on the tracking of its position at the \gls{bs}. The setup is depicted in Fig.~\ref{fig:scenario}, in which the \gls{bs} transmits in the downlink. For the depicted \gls{ue}, the dominant portion of the wireless signal arrives through a \emph{controllable} path, which the \gls{ris} can affect by changing its configuration.
The uncertainty of the \gls{ue}'s position is due to the noisy tracking process, which is often based on Kalman or particle filters. We propose a method that embeds the position uncertainty into the overall \gls{urllc} reliability requirement, determines the \gls{ris} configuration and sets the \gls{bs} transmit power to meet the reliability constraint.\footnote{The simulation code for the paper is available at \url{https://github.com/AAU-CNT/efficient-ris-aided-urllc}}

The rest of the paper is divided as follows: first, Sec.~\ref{sec:sys} presents the system model. The reliability bound and power optimization are derived in Sec.~\ref{sec:power-control}, and numerical performance results are described in Sec.~\ref{sec:results}. Finally, Sec.~\ref{sec:conc} concludes the paper and presents some possible avenues of future work.

\paragraph*{Notation} 
$\text{mod}(a, M)$ is the $a$ modulo $M$ operation; $\lfloor a \rfloor$ represent the nearest lower integer of $a$.
Lower and upper case boldface letters denote vectors $\mathbf{x}$ and matrices $\mathbf{X}$, respectively; the Euclidean norm of $\mathbf{x}$ is $\lVert\mathbf{x}\rVert$. $\mc{P}(e)$ is the probability that event $e$ occurs; $\mc{CN}(\bm{\mu},\mb{R})$ is the complex Gaussian distribution with mean $\bm{\mu}$ and covariance matrix $\mb{R}$; $\mc{R}(K, \Omega)$ is the Rice distribution with shape parameter $K$ and scale parameter $\Omega$; $\mathbb{E}[\cdot]$ is the expected value operation.

\section{System Model}\label{sec:sys}
We consider an industrial \gls{dl} \gls{urllc} communication scenario in which a single-antenna \gls{bs} has to communicate to a single-antenna mobile \gls{ue}. To ensure full coverage, a $N$-element \gls{ris} is deployed on the ceiling of the factory.
We define the three-dimensional coordinate system $O_1$, whose origin lies in the center of the \gls{ris} $\mathbf{x}_{r}=(0,0,0)\T$. The $x$ and $y$ axes are parallel to the horizontal dimensions of the \gls{ris}, and the $z$ axis points towards the floor of the factory. A depiction of coordinate system $O_1$ is given in Fig.~\ref{fig:O1}. 

It is assumed that there is a known one-to-one mapping between the locations of the UE and the \gls{ris} configurations. For the propagation scenario in this paper, the mapping is explicitly available as analytical expressions. In the considered industrial scenario with more complicated propagation characteristics, the availability of the mapping implies that a suitable calibration process has taken place, which can map the regions of the floor in which the \gls{ris} provides a significant benefit, as the direct path between \gls{bs} and \gls{ue} is blocked or weak. 

\subsection{Communication System Model}
The available space for the \gls{ue} is delimited by a square floor of area $D^2$~m$^2$ and the ceiling is at a height of $h$~m. The position  of the \gls{bs} is $\mb{x}_b = (x_b, y_b, z_b)\T$, and a \gls{los} path with the \gls{ris} exists. The \gls{ue} moves around the factory floor, so that its coordinates at time $t$ are $\mathbf{x}_u(t)=(x_{u}(t),y_{u}(t),h)\T$. 
Assuming a square \gls{ris} for simplicity of presentation, each element has position on $O_1$ given by 
$\mb{r}_n = d (\text{mod}(n -1, \sqrt{N}) - \frac{\sqrt{N}-1}{2}, \lfloor \frac{ n -1 }{\sqrt{N}} \rfloor - \frac{\sqrt{N}-1}{2}, 0)\T$
where $d < \lambda$ is the spacing between the center of neighboring elements and $\lambda$ is the carrier wavelength.  We assume $h \ge \frac{2}{\lambda} d^2 N^2$ in order to assure far-field propagation regardless the position of the \gls{ue}.
Each \gls{ris} element influences the incoming signal by inducing a phase shift $\phi_n$, and we assume that the attenuation imposed by each element of the \gls{ris} is strictly equal to 1. The vector containing the phase shifts of each \gls{ris} element is denoted as $\bm{\phi} = [e^{j\phi_1},\dots,e^{j\phi_N}]\T$, and it is referred as \emph{phase profile} or \emph{configuration} in the remainder of the paper.
The \gls{ris} phase profile is controlled by the \gls{bs}, which receives information about \gls{ue}s positioning through an out-of-band control channel. The aim of the \gls{bs} is then to optimize its own transmitting power and the \gls{ris} phase profile to communicate efficiently to the \gls{ue} while respecting the \gls{urllc} constraints.

We assume that the \gls{bs} tracks the \gls{ue}'s movements through a Kalman-like filter~\cite{kalman1960new}, which is a common assumption in indoor and outdoor tracking applications~\cite{vatansever2017visible}, so that the \gls{pdf} of the estimated position of the \gls{ue} at time $t$ is a bivariate Gaussian random variable:
\begin{equation}
    p_{\mathbf{x}_u(t)}(\mathbf{x})=\frac{1}{2\pi|\bm{\Sigma}(t)|}e^{-\frac{1}{2}(\mathbf{x}-\hat{\mathbf{x}}_u(t))\T\bm{\Sigma}(t)(\mathbf{x}-\hat{\mathbf{x}}_u(t))},
\end{equation}
where $\hat{\mathbf{x}}_u(t)$ is the estimated position at time $t$ and $\bm{\Sigma}(t)$ is the covariance matrix of the Kalman filter.

\begin{figure}
    \centering
    \includegraphics[width=4.3cm]{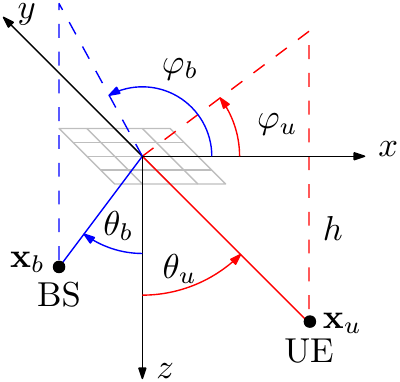}
    \caption{Coordinate system $O_1$ of the scenario.}\vspace{-0.4cm}
    \label{fig:O1}
\end{figure}

We can convert the \gls{ue}'s and \gls{bs}' positions from Cartesian to spherical coordinates $\mathbf{z}_u(t)=(r_u(t), \theta_u(t), \varphi_u(t))\T$, and $\mathbf{z}_b(t)=(r_b(t), \theta_b(t), \varphi_b(t))\T$ using the \gls{ris}'s center as the center of the sphere. The three coordinates represent the radius $r_i\in[0, \sqrt{D^2/2 + h^2}]$ from the center to the \gls{ris} to the user (\gls{bs}), the elevation angle $\theta_i\in[-\pi/2, \pi/2 ]$ computed from the $z$ axis to the \gls{ue}'s (\gls{bs}') position, and the  azimuth angle $\varphi_i\in[0, 2\pi]$ identifying the point on the $x-y$ plane, respectively, with $i\in\{b,u\}$. In the following, we omit the time index $t$ for brevity.

Assuming a transmission bandwidth well below the channel coherence bandwidth, each transmitted symbol will experience a channel which depends on the position of the \gls{ue} and the \gls{bs}, the \gls{ris} configuration loaded in that moment, and the frequency flat short term fading. Without loss of generality, we assume that the \gls{bs} transmits a single symbol $x$ with power $P$ towards the \gls{ue}; the received signal can be modeled as
\begin{equation} \label{eq:signal}
    y = \sqrt{\beta(\mathbf{x}_u) P} g_{b} g_{u} \mb{a}_b\T \mathrm{diag}(\bm{\phi}) \mb{a}_u x + n
\end{equation}
where $|g_{b}|, |g_{u}| \sim \mc{R}(K, 1)$ are the short term fading realizations for the \gls{bs}-\gls{ris} and \gls{ris}-\gls{ue} paths, and $n\sim \mc{CN}(0, \sigma^2)$ is the receiver noise. The path loss term is given by 
\begin{equation}
    \beta(\mathbf{x}_u) = \beta_0^2 G_b G_u \left(\frac{d_0^2}{\lVert\mb{x}_u\rVert \, \lVert\mb{x}_b\rVert}\right)^\xi 
\end{equation}
where $\beta_0$ is the path loss at a reference distance $d_0$, $G_b$ and $G_u$ are the antenna gain of the \gls{bs} and \gls{ue}, respectively; $\xi$ is the path loss exponent. Note that $\beta_0$ and $d_0$ are squared due to the double path \gls{bs}-\gls{ris} and \gls{ris}-\gls{ue}.
The steering vectors $\mb{a}_b$ and $\mb{a}_u$ represent the angle of arrival from the \gls{bs} to each \gls{ris} element and the angle of departure from each element to the \gls{ue}, respectively. The steering vectors are:
\begin{equation}
    [\mb{a}_i]_n = e^{j \frac{2 \pi}{\lambda} \frac{\mb{x}_i\T \mb{r}_n}{\lVert\mb{x}_i \rVert}} = e^{j \frac{2 \pi}{\lambda} \left(r_{n,1} \sin\theta_i\cos\varphi_i + r_{n,2} \sin\theta_i\sin\varphi_i\right)},   
\end{equation}
with  $i \in\{b, u\}$. The \gls{snr} is then:
\begin{equation} \label{eq:snr}
    \gamma = \frac{P}{\sigma^2} \beta |g_b g_u|^2 N^2 |A(\boldsymbol{\phi}|\hat{\mathbf{x}}_u,\mathbf{x}_u)|^2,
\end{equation}
where $A(\bm{\phi}|\hat{\mathbf{x}}_u,\mathbf{x}_u) = \frac{1}{N} \mb{a}_b\T \mathrm{diag}(\boldsymbol{\phi}) \mb{a}_u$ denotes the \gls{ris} \gls{af}, while $|A(\boldsymbol{\phi}|\hat{\mathbf{x}}_u,\mathbf{x}_u)|^2$ denotes the \gls{af} gain.

\subsection{Array Factor: Pointing and Beamwidth}
We define two indexes spanning through the \gls{ris} elements in the horizontal ($x$) and vertical ($y$) dimensions as $\ell = \text{mod}(n - 1, \sqrt{N})$ and $k=\big\lfloor \frac{ n -1 }{\sqrt{N}} \big\rfloor$, respectively, and rewrite the \gls{af} as
\begin{equation} \label{eq:af1}
\begin{aligned}
    A(\boldsymbol{\phi}|\hat{\mathbf{x}}_u,\mathbf{x}_u) &= \frac{1}{N} \sum_{\ell=0}^{\sqrt{N}-1} \sum_{k=0}^{\sqrt{N}-1} e^{j\phi_{\ell, k}} \\
    &\cdot e^{j \frac{2 \pi}{\lambda} d\left(\ell - \frac{\sqrt{N}-1}{2} \right) \left(\sin\theta_u\cos\varphi_u + \sin\theta_b\cos\varphi_b\right)}  \\
    & \cdot e^{j\frac{2\pi}{\lambda} d\left(k - \frac{\sqrt{N}-1}{2}\right) \left(\sin\theta_u\sin\varphi_u + \sin\theta_b \sin\varphi_b \right)},
    \end{aligned}
\end{equation}
where $\phi_{\ell,k} = \phi_n$ using the appropriate index.
Without loss of generality, the phase shift impressed by each element can be expressed as $\phi_{\ell,k} = \ell \phi_{x} + k \phi_{y}$. In this way, we can compensate for the (known) position of the \gls{bs}, while pointing toward the direction given by $\hat{\theta}$ and $\hat{\phi}$, by setting
\begin{equation}\label{eq:phaseprofile1}
\begin{aligned}
    \phi_x = - \frac{\pi d}{\lambda} \left(\sin\hat{\theta}\cos\hat{\varphi} + \sin\theta_b\cos\varphi_b\right), \\
    \phi_y = - \frac{\pi d}{\lambda} \left(\sin\hat{\theta}\sin\hat{\varphi} + \sin\theta_b\sin\varphi_b\right). \\
\end{aligned}
\end{equation}
Inserting~\eqref{eq:phaseprofile1} into~\eqref{eq:af1}, the \gls{af} can be rewritten as~\cite{balanis2015antenna}
\begin{equation} \label{eq:af}
\begin{aligned}
    A(\bm{\phi}|\hat{\mathbf{x}}_u,\mathbf{x}_u) = \frac{e^{j \frac{\sqrt{N}-1}{2} (\phi_x + \phi_y)}}{N}   \frac{\sin(\sqrt{N} f_x)}{\sin(f_x)} \frac{\sin(\sqrt{N} f_y)}{\sin(f_y)},
    \end{aligned}
\end{equation}
with $f_x = \frac{\pi d}{\lambda} (\sin\theta_u\cos\varphi_u - \sin\hat{\theta}\cos\hat{\varphi})$ and $f_y =\frac{\pi d}{\lambda} (\sin\theta_u\sin\varphi_u - \sin\hat{\theta}\sin\hat{\varphi})$.
If perfect knowledge of the \gls{ue}'s position is available, a trivial solution to maximize the \gls{af} is to set $\hat{\theta} = \theta_u$, $\hat{\varphi} = \varphi_u$; however, a positioning error is always present. Therefore, we resort to evaluate the illuminated region (on the floor) $\mc{G}(A_0)$ in which the \gls{af} gain is at least equal to $A_0$ after setting $\hat{\theta} = \hat{\theta}_u$, $\hat{\varphi} = \hat{\varphi}_u$.

The \gls{af} of a square \gls{upa} when pointing toward the direction given by $\hat{\theta}$ and $\hat{\varphi}$, i.e., $\hat{\mb{w}} = (\sin\hat{\theta}\cos\hat{\varphi}, \sin\hat{\theta}\sin\hat{\varphi},\cos\hat{\theta})\T$ is given by~\eqref{eq:af}. The 3D beamwidth generating an \gls{af} gain of $A_0\in(0,1]$ can then be approximated by the angles~\cite[Section 6.10]{balanis2015antenna}
\begin{equation} \label{eq:3Dbeamwidth}
    \begin{aligned}
        \Delta \theta(A_0) = \frac{\Delta\Theta(A_0)}{\cos\hat{\theta}}, \quad \Delta \varphi(A_0) =  \Delta\Theta(A_0).
    \end{aligned}
\end{equation}
Thus, the points in the 3D space with an \gls{af} gain of $A_0$ lie on the surface of an elliptic cone whose vertical axis is the pointing direction $\hat{\mb{w}}$, the major diameter ($2a$) is generated by the angle $\Delta \theta(A_0)$ in the elevation plane is defined by $\varphi = \hat{\varphi}$, and the minor diameter ($2b$) is generated by the angle $\Delta \theta(A_0)$ on the plane perpendicular to the elevation one (see~\cite[Fig. 6.38]{balanis2015antenna}).
In~\eqref{eq:3Dbeamwidth}, $\Delta\Theta(A_0)$ is the beamwidth of the \gls{ula} spanning through the $x$ (or $y$) dimension providing an \gls{af} gain of $A_0$, whose approximation is given
in the following proposition.
\begin{proposition}
The beamwidth of a \gls{ula} given an \gls{af} target gain $A_0$ can be approximated as
\begin{equation}
    \Delta\Theta(A_0) \approx \arcsin\left(2 \frac{\lambda x(A_0)}{\pi d \sqrt{N}}\right),
\end{equation}
where $x(A_0) = \{x \,|\, \text{sinc}(x) = A_0\}$.
\end{proposition}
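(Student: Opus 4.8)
The plan is to specialize the \gls{af} in~\eqref{eq:af} to a single dimension. Fixing attention on the \gls{ula} along the $x$ axis (the $y$ axis is handled identically), the \gls{af} magnitude is $|A_{\mathrm{ULA}}| = \frac{1}{\sqrt{N}}\bigl|\sin(\sqrt{N} f_x)/\sin(f_x)\bigr|$, where $f_x = \frac{\pi d}{\lambda}(\sin\theta\cos\varphi - \sin\hat\theta\cos\hat\varphi)$ vanishes exactly on the pointing direction. First I would note that in the regime of interest --- $N$ large, so the main lobe is narrow, and $d<\lambda$ --- the values of $f_x$ on and near the main lobe are small, so $\sin(f_x)\approx f_x$ with negligible error. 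Substituting this into the denominator turns the Dirichlet kernel into $|A_{\mathrm{ULA}}|\approx \bigl|\sin(\sqrt{N} f_x)/(\sqrt{N} f_x)\bigr| = |\text{sinc}(\sqrt{N} f_x)|$, with $\text{sinc}$ in the unnormalized convention of the statement.

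Next I would invert the level set $|A_{\mathrm{ULA}}| = A_0$. Since $\text{sinc}$ decreases strictly from $1$ to $0$ on $[0,\pi]$, for every $A_0\in(0,1]$ there is a unique $x(A_0)$ in that interval with $\text{sinc}(x(A_0)) = A_0$ (the main-lobe solution), and by evenness the gain equals $A_0$ precisely when $\sqrt{N} f_x = \pm x(A_0)$, i.e.\ $f_x = \pm x(A_0)/\sqrt{N}$. Substituting back the definition of $f_x$ and taking the elevation cut $\varphi=\hat\varphi$ (so the cosine factor drops out), the two edges of the main lobe satisfy $\sin\theta - \sin\hat\theta = \pm \lambda x(A_0)/(\pi d\sqrt{N})$, so the main lobe spans a total width $2\lambda x(A_0)/(\pi d\sqrt{N})$ in ``sine space.''

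Finally I would convert this sine-space width into an angular beamwidth. Writing the edge angles as $\theta_\pm = \arcsin\bigl(\sin\hat\theta \pm \lambda x(A_0)/(\pi d\sqrt{N})\bigr)$ and setting $\Delta\Theta(A_0) = \theta_+ - \theta_-$, a first-order expansion --- valid because the per-edge offset is small when $N$ is large --- collapses the difference of arcsines into $\Delta\Theta(A_0) \approx \arcsin\bigl(2\lambda x(A_0)/(\pi d\sqrt{N})\bigr)$, which is the claimed expression; equivalently this is the standard broadside ($\hat\theta=0$) beamwidth formula of~\cite[Section 6.10]{balanis2015antenna}, where it holds with equality.

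I expect the main obstacle to be making the two approximations precise rather than purely heuristic: justifying $\sin(f_x)\approx f_x$ requires a quantitative bound on $x(A_0)/\sqrt{N}$ (hence the mild assumptions on $N$ and $d<\lambda$ that keep the main lobe inside the near-linear region of $\sin$), and the passage from $\theta_+ - \theta_-$ to a single $\arcsin$ of the doubled offset is only first-order accurate and becomes slightly anisotropic away from broadside. Since the proposition is explicitly stated as an approximation, I would present both steps as controlled small-parameter expansions, and note that the $1/\cos\hat\theta$ stretching of the elevation beamwidth off broadside is reinstated separately in~\eqref{eq:3Dbeamwidth}.
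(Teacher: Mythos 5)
Your proposal is correct and follows essentially the same route as the paper's own (one-line) proof: approximate the Dirichlet kernel by $\mathrm{sinc}(\sqrt{N}f_x)$ in the main lobe, invert the level set $\mathrm{sinc}(\sqrt{N}f_x)=A_0$, neglect the pointing offset, and solve for the angle, with your version simply making the small-argument and first-order steps explicit. The only minor slip is the claim that the cosine factor ``drops out'' on the cut $\varphi=\hat\varphi$ --- it only does so for $\hat\varphi=0$, which is exactly the choice the paper makes since the \gls{ula} spans the $x$ axis.
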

\begin{proof}
In the main lobe, the \gls{ula} \gls{af} is $\approx \text{sinc}(\sqrt{N} f_x)$~\cite{balanis2015antenna}, where $\varphi = \hat{\varphi} = 0$ because the elements span the $x$ axis. Neglecting the pointing effect, i.e.,  $\hat{\theta} = \pi / 2$, and solving $ \sqrt{N} f_x = x(A_0)$ with respect to $\theta$ completes the proof.
\end{proof}
The arguments $x(A_0)$ can be easily obtained by numerical simulations for the desired values of $A_0$.

\section{Power Control for URLLC}
\label{sec:power-control}
While most \gls{urllc} packets are short, the finite blocklength effects disappear when only statistical knowledge of the \gls{csi} is available, as proven in~\cite{Durisi2016}. Hence, we can use Shannon's capacity formula to derive the minimum required \gls{snr} to reliably deliver the data, and perform power control for the \gls{urllc} transmission. 
If we consider a packet of length $L$ bits, which has to be transmitted with a maximum latency $T$ over bandwidth $B$, the minimum required \gls{snr} $\gamma_0$ is
\begin{equation}
    \gamma_0=2^{\frac{L}{BT}}-1.\label{eq:min_snr}
\end{equation}
The actual \gls{snr} in~\eqref{eq:snr} includes two independent random components: the first is the fading, which is given by the product of the fading on the \gls{bs}-\gls{ris} and \gls{ris}-\gls{ue} channels, while the second depends on the actual position of the \gls{ue}. If the \gls{ue}'s coordinates are given by $\mathbf{x}_u$, the average \gls{snr} $\hat{\gamma}(\hat{\mathbf{x}}_u,\mathbf{x}_u)$ is given by:
\begin{equation}\label{eq:av_snr}
    \hat{\gamma}(\hat{\mathbf{x}}_u,\mathbf{x}_u)=\frac{P}{\sigma^2} \beta(\mathbf{x}_u) \mathbb{E}[|g_m g_b|^2] N^2 |A(\boldsymbol{\phi}|\hat{\mathbf{x}}_u,\mathbf{x}_u)|^2,
\end{equation}
and we know that $\mathbb{E}[|g_mg_b|^2]=1$. The two random terms (the fading and the \gls{ue}'s position) can be considered as independent. The \gls{bs} can then optimize the transmission power $P$ to minimize energy consumption while meeting the \gls{urllc} requirements.

The intuition for our procedure is the following: first, we find an upper bound to the outage probability by considering either a deep fading event or a large positioning error as a failure (without computing the intersection, or the possibility that a lucky fading gain might compensate for a larger positioning error). We then try to find the minimum gain that allows the resulting beam to illuminate the region in which the \gls{ue} might be, and invert the resulting values to find a transmission power that ensures the \gls{urllc} constraints are met.

\subsection{Reliability Bound}

The overall problem is complex, as the distribution of the instantaneous \gls{snr} is extremely difficult and must be obtained numerically, making the computation difficult: as \gls{urllc} requirements limit the computational effort that can be spent in optimizing the system before transmission, this makes a direct calculation infeasible. However, we can compute a lower bound to reliability by separating the two components:
\begin{theorem}\label{th:prob_bound}
Let $G_0>0$ be a positive value. If $\mc{P}(|g_ug_b|^2\leq G_0)=\delta$, we have:
\begin{equation}
    \mc{P}(\gamma<\gamma_0)\leq\delta+\mc{P}\left(\hat{\gamma}\leq\frac{\gamma_0}{G_0}\right).\label{eq:th1_cond}
\end{equation}
\end{theorem}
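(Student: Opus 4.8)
The plan is to exploit the multiplicative structure of the SNR. Comparing~\eqref{eq:snr} with~\eqref{eq:av_snr} and using $\mathbb{E}[|g_mg_b|^2]=1$, we can write $\gamma=|g_bg_u|^2\,\hat\gamma$: the instantaneous SNR equals the average SNR $\hat\gamma$ --- a function of the random position $\mathbf{x}_u$ alone --- scaled by the fading gain $|g_bg_u|^2$, and the two factors are independent. Hence the outage event is $\{\gamma<\gamma_0\}=\{|g_bg_u|^2\,\hat\gamma<\gamma_0\}$.

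Next I would introduce the ``deep fade'' event $\mathcal{E}=\{|g_bg_u|^2\le G_0\}$, which by hypothesis satisfies $\mathcal{P}(\mathcal{E})=\delta$, and decompose
\begin{equation*}
\mathcal{P}(\gamma<\gamma_0)=\mathcal{P}(\gamma<\gamma_0,\,\mathcal{E})+\mathcal{P}(\gamma<\gamma_0,\,\mathcal{E}^c)\le\delta+\mathcal{P}(\gamma<\gamma_0,\,\mathcal{E}^c),
\end{equation*}
where the first term is bounded trivially by $\mathcal{P}(\mathcal{E})$. For the second term, on $\mathcal{E}^c$ we have $|g_bg_u|^2>G_0$ and, since $\hat\gamma>0$, this implies $\gamma=|g_bg_u|^2\,\hat\gamma>G_0\,\hat\gamma$. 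Therefore $\{\gamma<\gamma_0\}\cap\mathcal{E}^c\subseteq\{G_0\,\hat\gamma<\gamma_0\}$, so that $\mathcal{P}(\gamma<\gamma_0,\,\mathcal{E}^c)\le\mathcal{P}(\hat\gamma<\gamma_0/G_0)\le\mathcal{P}(\hat\gamma\le\gamma_0/G_0)$. Combining the two bounds yields~\eqref{eq:th1_cond}.

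I do not expect a real obstacle: the argument is essentially a union bound that deliberately treats a deep fade and a large positioning error as disjoint failure modes, ignoring the event in which a favorable fading realization compensates a poor beam alignment (and vice versa). This is precisely the slack that will later show up as the reported $1.5$--$4.5$~dB optimality gap. The only points that need a little care are that $\hat\gamma$ is strictly positive, so one may divide the inequality $\gamma>G_0\,\hat\gamma$ through by $G_0$, and that the decomposition itself does not rely on the independence of the fading and the position --- that independence is used only afterwards, when $\delta$ and $\mathcal{P}(\hat\gamma\le\gamma_0/G_0)$ are evaluated separately in the power-control step.
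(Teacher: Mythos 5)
Your proof is correct and follows essentially the same route as the paper's: both split on the deep-fade event $\{|g_ug_b|^2\le G_0\}$, bound the outage probability trivially on that event, and use $\gamma>G_0\hat\gamma$ on its complement to reduce to $\mc{P}(G_0\hat\gamma<\gamma_0)$. The only (cosmetic) difference is that you work with joint probabilities and a set inclusion, which — as you note — sidesteps the independence of fading and position that the paper's conditional-probability phrasing implicitly invokes when dropping the conditioning.
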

\begin{proof}
We can consider two cases:
\begin{enumerate}
    \item In the first case, $|g_ug_b|^2> G_0$. We can then express the following bound:
    \begin{equation}
        \gamma=\hat{\gamma}|g_ug_b|^2>\hat{\gamma}G_0.
    \end{equation}
    If $\hat{\gamma}G_0\geq\gamma_0$, we then always have $\gamma\geq\gamma_0$, and in this case we have:
    \begin{equation}
       \mc{P}(\gamma<\gamma_0|G_0<|g_ug_b|^2)\leq \mc{P}(\hat{\gamma}G_0<\gamma_0). 
    \end{equation}
    \item In the second case, $|g_ug_b|^2\leq G_0$, and $\mc{P}(\gamma<\gamma_0|G_0\geq|g_ug_b|^2)\leq 1$ (which is trivially true).
\end{enumerate}
By definition, the first and second case are mutually exclusive, and occur with probability $1-\delta$  and $\delta$, respectively. By applying the law of total probability, the overall probability $\mc{P}(\gamma<\gamma_0)$ is upper bounded by $\delta+(1-\delta)\mc{P}(G_0\hat{\gamma}<\gamma_0)$. As $\delta\leq1$, the theorem is proven.
\end{proof}

We can then consider power control, applying the bound in the first theorem to find the power requirement.

\begin{theorem}\label{th:power_control}
Let $\mc{G}(A_0):\{\mathbf{x}\in\mathbb{R}^2:|A(\boldsymbol{\phi}|\hat{\mathbf{x}}_u,\mathbf{x})|^2\geq A_0\}$ be the set of points for which the \gls{af} gain is larger than $A_0$, and let $\varepsilon=\mc{P}(\mb{x}_u\in\mc{G}(A_0)|\hat{\mb{x}}_u,\bm{\Sigma}_u)$ be the probability that the \gls{ue} is inside the set. As above, let $G_0>0$ be a positive value so that $\mc{P}(|g_ug_b|^2\leq G_0)=\delta$. We then have:
\begin{equation}
    \mc{P}(\gamma<\gamma_0)\leq\delta+\varepsilon,\,\, \forall P\geq\frac{\sigma^2\gamma_0}{N^2G_0A_0\min_{\mb{x}\in\mc{G}(A_0)}\beta(\mb{x})}.
\end{equation}
\end{theorem}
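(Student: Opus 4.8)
The plan is to reduce the statement to Theorem~\ref{th:prob_bound} and then turn the residual term $\mc{P}(\hat\gamma\le\gamma_0/G_0)$ into the pure geometric/probabilistic quantity $\varepsilon$. By Theorem~\ref{th:prob_bound}, for the given $G_0>0$ with $\mc{P}(|g_ug_b|^2\le G_0)=\delta$ we already have $\mc{P}(\gamma<\gamma_0)\le\delta+\mc{P}(\hat\gamma\le\gamma_0/G_0)$, so it is enough to show that the stated lower bound on $P$ forces $\mc{P}(\hat\gamma\le\gamma_0/G_0)\le\varepsilon$. I would obtain this from the event inclusion $\{\hat\gamma\le\gamma_0/G_0\}\subseteq\{\mb{x}_u\notin\mc{G}(A_0)\}$, valid up to a null set, followed by taking probabilities under the bivariate Gaussian tracking density.

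For the inclusion, I would first substitute $\mathbb{E}[|g_mg_b|^2]=1$ into~\eqref{eq:av_snr}, giving $\hat\gamma(\hat{\mb{x}}_u,\mb{x}_u)=\frac{P}{\sigma^2}\beta(\mb{x}_u)N^2|A(\bm{\phi}|\hat{\mb{x}}_u,\mb{x}_u)|^2$, and then argue contrapositively. Suppose $\mb{x}_u\in\mc{G}(A_0)$. By the definition of $\mc{G}(A_0)$ we have $|A(\bm{\phi}|\hat{\mb{x}}_u,\mb{x}_u)|^2\ge A_0$, while $\beta(\mb{x}_u)\ge\min_{\mb{x}\in\mc{G}(A_0)}\beta(\mb{x})$ holds simply because the minimum is taken over a set containing $\mb{x}_u$; this minimum is attained and strictly positive because $\beta$ is continuous and positive and $\mc{G}(A_0)$ is confined to the bounded factory floor. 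Hence
\begin{equation}
    \hat\gamma(\hat{\mb{x}}_u,\mb{x}_u) \ge \frac{P}{\sigma^2}N^2A_0\min_{\mb{x}\in\mc{G}(A_0)}\beta(\mb{x}),
\end{equation}
and for $P\ge\frac{\sigma^2\gamma_0}{N^2G_0A_0\min_{\mb{x}\in\mc{G}(A_0)}\beta(\mb{x})}$ the right-hand side is at least $\gamma_0/G_0$, i.e.\ $\mb{x}_u\in\mc{G}(A_0)\Rightarrow\hat\gamma\ge\gamma_0/G_0$. Taking complements gives $\{\hat\gamma<\gamma_0/G_0\}\subseteq\{\mb{x}_u\notin\mc{G}(A_0)\}$; the only borderline case, $\hat\gamma=\gamma_0/G_0$ with $\mb{x}_u\in\partial\mc{G}(A_0)$, lies on a measure-zero set for the Gaussian position estimate and can be ignored. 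Passing to probabilities then yields $\mc{P}(\hat\gamma\le\gamma_0/G_0)\le\mc{P}(\mb{x}_u\notin\mc{G}(A_0)\mid\hat{\mb{x}}_u,\bm{\Sigma}_u)=\varepsilon$, and plugging this into Theorem~\ref{th:prob_bound} proves $\mc{P}(\gamma<\gamma_0)\le\delta+\varepsilon$ for every $P$ above the threshold, since $\hat\gamma$ is nondecreasing in $P$.

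The step I expect to require the most care is not a computation but the bookkeeping of the two $\mb{x}_u$-dependent factors in $\hat\gamma$: the array-factor gain $|A|^2$ and the path loss $\beta$ must be lower-bounded \emph{simultaneously} over $\mc{G}(A_0)$, which is exactly why the worst-case path loss $\min_{\mb{x}\in\mc{G}(A_0)}\beta(\mb{x})$ (and not $\beta(\hat{\mb{x}}_u)$) ends up in the power threshold. One should also check that this minimum exists and is positive so that the threshold is well defined, and dispatch the $\le$-versus-$<$ boundary gap via the absolute continuity of the tracking density; both are minor. Everything else is a direct substitution in~\eqref{eq:av_snr} and monotonicity of $\hat\gamma$ in $P$.
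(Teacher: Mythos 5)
Your proof is correct and follows essentially the same route as the paper's: invoke Theorem~\ref{th:prob_bound}, lower-bound $\hat\gamma$ over $\mc{G}(A_0)$ via $A_0$ and the worst-case path loss so that the power threshold forces $G_0\hat\gamma\geq\gamma_0$ there, and fall back on the trivial bound outside the region (the paper phrases this as a law-of-total-probability case split rather than your event inclusion, which is only a cosmetic difference). If anything you are slightly more careful than the paper on the $\leq$-versus-$<$ boundary case and in reading $\varepsilon$ as $\mc{P}(\mb{x}_u\notin\mc{G}(A_0))$ --- the theorem statement literally says ``inside,'' which is a typo the proof itself contradicts.
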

\begin{proof}
We know that the average \gls{snr} in a certain position is given by~\eqref{eq:av_snr}. If we consider a point in $\mc{G}(A_0)$, the average \gls{snr} is lower bounded by:
\begin{equation}
    \hat{\gamma}(\hat{\mb{x}}_u,\mb{x}_u)\geq \frac{P}{\sigma^2}N^2A_0\min_{\mb{x}\in\mc{G}(A_0)}\beta(\mb{x}),\quad \forall \mb{x}_u\in\mc{G}(A_0).
\end{equation}
Since the result above is a lower bound, we can divide the two cases:
\begin{enumerate}
    \item If $\mb{x}_u\in\mc{G}(A_0)$, we can set a value $P$ that ensures $G_0\hat{\gamma}>\gamma_0$ by applying the lower bound:
    \begin{equation}
        P\geq\frac{\sigma^2\gamma_0}{N^2G_0A_0\min_{\mb{x}\in\mc{G}(A_0)}\beta(\mb{x})}.
    \end{equation}
     We then have $\mc{P}(G_0\hat{\gamma}<\gamma_0|\mb{x}_u\in\mc{G}(A_0))=0$ for all power levels that satisfy the condition.
    \item If $\mb{x}_u\notin\mc{G}(A_0)$, we consider the packet as lost, i.e., we use the trivial bound $\mc{P}(G_0\hat{\gamma}<\gamma_0|\mb{x}_u\notin\mc{G}(A_0))\leq1$. By definition, this case occurs with probability $\varepsilon$.
\end{enumerate}
We then apply Theorem~\ref{th:prob_bound} and the law of total probability:
\begin{equation*}
\begin{aligned}
    \mc{P}(\gamma<\gamma_0)\leq&\delta+\mc{P}\left(\hat{\gamma}\leq\frac{\gamma_0}{G_0}\right)\\
    =&\delta+\varepsilon\mc{P}\left(\hat{\gamma}\leq\frac{\gamma_0}{G_0}\Big|\mb{x}_u\notin\mc{G}(A_0)\right)\\&+(1-\varepsilon)\mc{P}\left(\hat{\gamma}\leq\frac{\gamma_0}{G_0}\Big|\mb{x}_u\notin\mc{G}(A_0)\right)
    \leq\delta+\varepsilon,
\end{aligned}
\end{equation*}
completing the proof.
\end{proof}

\subsection{Beam Projection}

We now need to find a closed-form expression for region $\mc{G}(A_0)$. The region in 3D space that has an \gls{af} gain larger than $A_0$ is an elliptic cone, whose axis is the line between the \gls{ris} and $\hat{\mb{x}}_u$ and whose base is an ellipse whose axes are defined by the two beamwidth parameters from~\eqref{eq:3Dbeamwidth}. The projection of this elliptic cone on the plane defined by the floor (i.e., $z=h$) defines $\mc{G}(A_0)$. 

\begin{figure*}[hbt]
\centering
\subfloat[Heatmap of the \gls{af} gain.\label{fig:projection:heatmap}]{
\begin{tikzpicture}

    
\definecolor{forestgreen}{rgb}{0.13, 0.55, 0.13}

\definecolor{gray}{rgb}{0.450, 0.450, 0.45}

\begin{axis}[
width=\mwidth,
height=\mwidth,
scale only axis,
tick align=outside,
tick pos=left,
xlabel={\(\displaystyle x\) [m]},
xmin=5, xmax=16.5,
ylabel={\(\displaystyle y\) [m]},
ymin=5, ymax=16.5,
colormap/Oranges,
colorbar sampled,
colorbar style={
    samples=11, 
    ytick={0.0, 0.1, 0.2, 0.3, 0.4, 0.5, 0.6, 0.7, 0.8, 0.9, 1.0},
    ylabel={$|A(\bm{\phi})|^2$}, 
    ytick pos=right, 
    ytick align=inside,
    ytick style={color=gray},
    },
point meta min=0.0,
point meta max=1.0,
]
\addplot graphics [includegraphics cmd=\pgfimage,xmin=5, xmax=16.5, ymin=5, ymax=16.5] {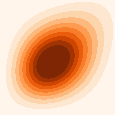};

\draw[draw=gray,rotate around={45:(axis cs:10.6033536581688,10.6033536581688)}] (axis cs:10.6033536581688,10.6033536581688) ellipse (5.72209035349177 and 4.26064583985972);

\draw[draw=gray,rotate around={45:(axis cs:10.4396702769788,10.4396702769788)}] (axis cs:10.4396702769788,10.4396702769788) ellipse (4.3784912175736 and 3.26988763549885);

\draw[draw=gray,rotate around={45:(axis cs:10.346620309982,10.346620309982)}] (axis cs:10.346620309982,10.346620309982) ellipse (3.39177078246033 and 2.53729531815614);

\draw[draw=gray,rotate around={45:(axis cs:10.2796782542677,10.2796782542677)}] (axis cs:10.2796782542677,10.2796782542677) ellipse (2.4514643388926 and 1.8361223974919);

\draw[draw=gray,rotate around={45:(axis cs:10.2268559702142,10.2268559702142)}] (axis cs:10.2268559702142,10.2268559702142) ellipse (1.29877355109335 and 0.9737108975288);

\draw (axis cs:14.649482549685,14.649482549685) node[
  scale=0.4165,
  anchor=base west,
  text=gray,
  rotate=0.0
]{0.1};
\draw (axis cs:13.5357311082908,13.5357311082908) node[
  scale=0.4165,
  anchor=base west,
  text=gray,
  rotate=0.0
]{0.3};
\draw (axis cs:12.7449644304901,12.7449644304901) node[
  scale=0.4165,
  anchor=base west,
  text=gray,
  rotate=0.0
]{0.5};
\draw (axis cs:12.0131253121357,12.0131253121357) node[
  scale=0.4165,
  anchor=base west,
  text=gray,
  rotate=0.0
]{0.7};
\draw (axis cs:11.145227555418,11.145227555418) node[
  scale=0.4165,
  anchor=base west,
  text=gray,
  rotate=0.0
]{0.9};

\end{axis}
\end{tikzpicture}}%
\qquad
\subfloat[Scatter plot of \gls{ue} positions $\mb{x}_u$.\label{fig:projection:scatter}]{
\begin{tikzpicture}

\definecolor{darkgray176}{RGB}{176,176,176}
\definecolor{darkgray}{rgb}{0.450, 0.450, 0.45}

\begin{axis}[
width=\mwidth,
height=\mwidth,
scale only axis,
tick align=outside,
tick pos=left,
xlabel={\(\displaystyle x\) [m]},
xmin=5, xmax=16.5,
ylabel={\(\displaystyle y\) [m]},
ymin=5, ymax=16.5,
xmajorgrids,
ymajorgrids,
colorbar,
colorbar style={ytick={0,0.5,1},
yticklabels={
  \(\displaystyle {-99}\),
  \(\displaystyle {-98}\),
  \(\displaystyle {-97}\)
},ylabel={$\beta$ [dB]}},
colormap/Oranges,
point meta min=0.0,
point meta max=1.0,
]
\addplot graphics [includegraphics cmd=\pgfimage,xmin=5, xmax=16.5, ymin=5, ymax=16.5] {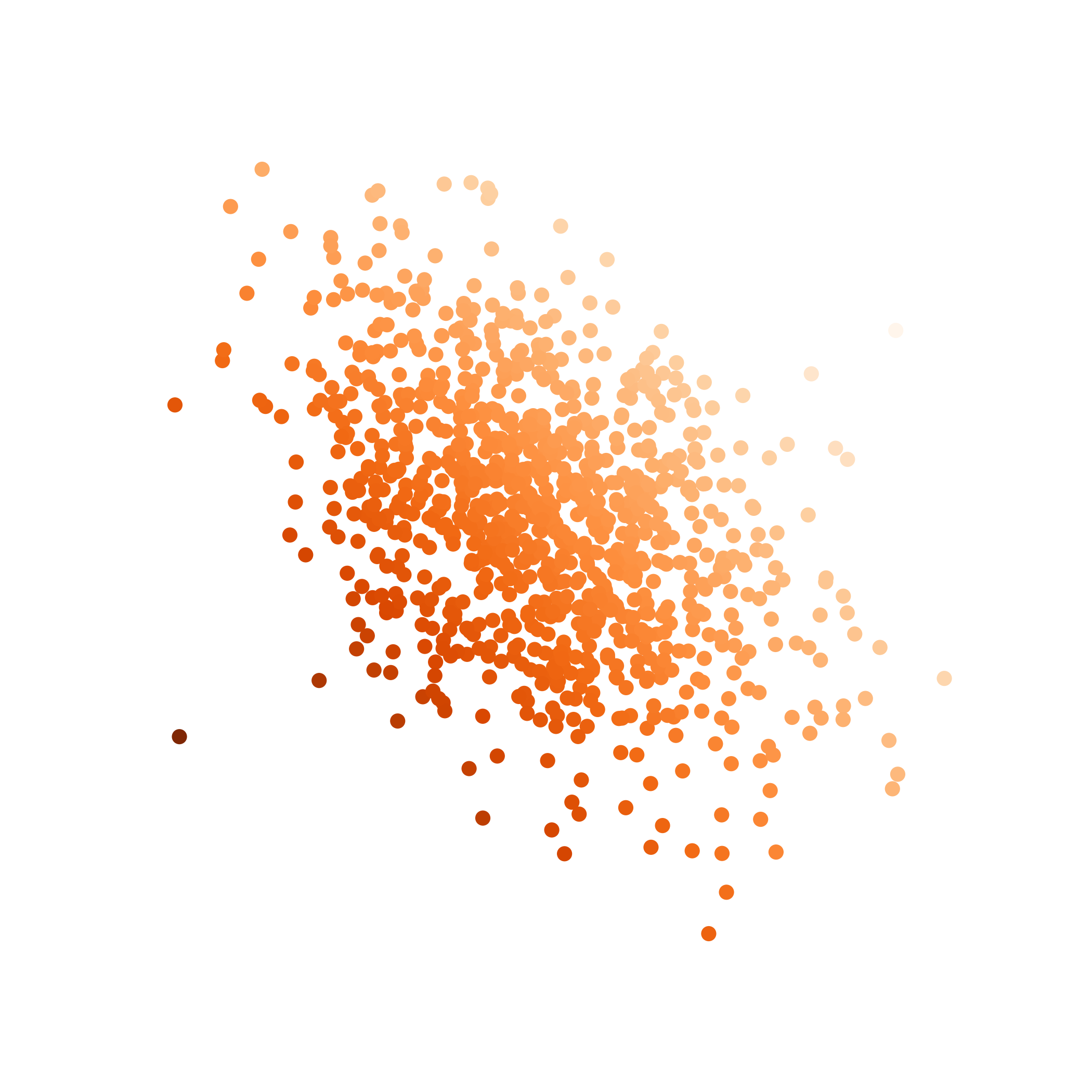};

\draw[draw=gray,rotate around={45:(axis cs:10.6033536581688,10.6033536581688)}] (axis cs:10.6033536581688,10.6033536581688) ellipse (5.72209035349177 and 4.26064583985972);

\draw[draw=gray,rotate around={45:(axis cs:10.4396702769788,10.4396702769788)}] (axis cs:10.4396702769788,10.4396702769788) ellipse (4.3784912175736 and 3.26988763549885);

\draw[draw=gray,rotate around={45:(axis cs:10.346620309982,10.346620309982)}] (axis cs:10.346620309982,10.346620309982) ellipse (3.39177078246033 and 2.53729531815614);

\draw[draw=gray,rotate around={45:(axis cs:10.2796782542677,10.2796782542677)}] (axis cs:10.2796782542677,10.2796782542677) ellipse (2.4514643388926 and 1.8361223974919);

\draw[draw=gray,rotate around={45:(axis cs:10.2268559702142,10.2268559702142)}] (axis cs:10.2268559702142,10.2268559702142) ellipse (1.29877355109335 and 0.9737108975288);

\draw (axis cs:14.649482549685,14.649482549685) node[
  scale=0.4165,
  anchor=base west,
  text=gray,
  rotate=0.0
]{0.1};
\draw (axis cs:13.5357311082908,13.5357311082908) node[
  scale=0.4165,
  anchor=base west,
  text=gray,
  rotate=0.0
]{0.3};
\draw (axis cs:12.7449644304901,12.7449644304901) node[
  scale=0.4165,
  anchor=base west,
  text=gray,
  rotate=0.0
]{0.5};
\draw (axis cs:12.0131253121357,12.0131253121357) node[
  scale=0.4165,
  anchor=base west,
  text=gray,
  rotate=0.0
]{0.7};
\draw (axis cs:11.145227555418,11.145227555418) node[
  scale=0.4165,
  anchor=base west,
  text=gray,
  rotate=0.0
]{0.9};
\end{axis}

\end{tikzpicture}}%
\caption{$\mc{G}(A_0)$ for different $A_0$ using the ellipse approximation with $h=25$ m, $\hat{\varphi} = \pi / 4$, $\hat{\theta}=\pi / 6$.}\vspace{-0.4cm}
\label{fig:projection}
\end{figure*}

We can write the equation of the cone in a standard form employing a new coordinate system $O_2$. $O_2$ is such that the $w$ axis is the cone's vertical axis, and $u$ and $v$ axes are parallel to the major and minor axis of the ellipse, respectively.
The rotation matrix of the transformation from $O_1$ to $O_2$ is
\begin{equation} \label{eq:rotation-matrix}
    \mb{R} = \begin{pmatrix}
    -\cos\hat{\theta}\cos\hat{\varphi} & -\cos\hat{\theta}\sin\hat{\varphi} & \sin\hat{\theta} \cos\hat{\varphi} \\
    \sin\hat{\varphi} & -\cos\hat{\varphi} & 0 \\
    \sin\hat{\theta}\cos\hat{\varphi} & \sin\hat{\theta}\sin\hat{\varphi} & \cos\hat{\theta} \\
    \end{pmatrix}.
\end{equation}
Hence, the point on the floor ($z = h$) of in $O_1$ can be represented in $O_2$ following the relation
\begin{equation} \label{eq:transform}
    \begin{pmatrix}
    u & v & w 
    \end{pmatrix}^T
    = \mb{R} 
    \begin{pmatrix}
    x & y & z   
    \end{pmatrix}^T.
\end{equation}
Using coordinate system $O_2$, the equation for the cone is
\begin{equation} \label{eq:cone-O2}
    \frac{u^2}{a^2}+\frac{v^2}{b^2}=w^2,
\end{equation}
where $a$ and $b$ can be computed from the definition of the 3D beamwidth angles given in~\eqref{eq:3Dbeamwidth} at reference distance $w=1$:
\begin{equation}
\begin{aligned}
    a =  \tan\left( \Delta\theta(A_0) /2 \right), \quad
    b = \tan\left( \Delta\varphi(A_0) / 2 \right).
\end{aligned}
\end{equation}
Therefore, we can substitute~\eqref{eq:transform} into~\eqref{eq:cone-O2} and compute the equation of the intersection of the cone with the floor, i.e.,
\begin{equation} \label{eq:proj_ellipse}
\begin{cases}
        Ax^2+Bxy+Cy^2+Dx+Ey+F=0;\\
        z=h.
\end{cases}
\end{equation}
where the parameters indicated by capital letters are given by:
\begin{equation}
\begin{cases}
    A&=\cos^2\hat{\varphi}\left(a^{-2}\cos^2\hat{\theta}- \sin^2\hat{\theta} \right) + b^{-2}\sin^2\hat{\varphi} \\
    B&=2\cos\hat{\varphi}\sin\hat{\varphi}\left(a^{-2}\cos^2\hat{\theta} - b^{-2} - \sin^2\hat{\theta} \right)\\
    C&= \sin^2\hat{\varphi}\left(a^{-2}\cos^2\hat{\theta}- \sin^2\hat{\theta}\right) + b^{-2}\cos^2\hat{\varphi}  \\
    D&=- 2h \cos\hat{\theta}\sin\hat{\theta}\cos\hat{\varphi} \left(a^{-2}\cos\hat{\varphi} + 1 \right)\\
    E&= -2h \cos\hat{\theta}\sin\hat{\theta} \sin\hat{\varphi} \left( a^{-2}\cos\hat{\varphi} + 1 \right)\\
    F&=h^2\left(a^{-2} \sin^2\hat{\varphi} \cos^2\hat{\varphi} + \cos^2\hat{\theta} \right).
\end{cases}
\end{equation}

The center and semi-axes of the projected ellipse are denoted as $\mb{x}_c$, $a'$ and $b'$, respectively, and can be easily derived from~\eqref{eq:proj_ellipse}. The ellipse's major semi-axis $a'$ is rotated by an angle $\hat{\varphi}$ from the $x$ axis. Fig.~\ref{fig:projection:heatmap} shows the heatmap of the \gls{af} gain and the projected ellipses obtained with different values of $A_0$: the heatmap shows the real \gls{af} gain, while the ellipses drawn on it represent the approximation given above. The approximation is generally good, if slightly pessimistic, guaranteeing that points inside the ellipse will respect the condition on the \gls{af} gain.

\subsection{Iterative Optimization}

By using beam projection, we can then compute the region $\mc{G}(A_0)$, which corresponds to an ellipse, and find the minimum power required in that region. If our goal is to find the minimum power $P$ that ensures a reliability level $p_s$, 
we can set $\delta$ and $\varepsilon$ such that $\delta + \varepsilon = 1 - p_s$ and run Algorithm~\ref{alg:power}.

\pgfplotsset{
width=3.5cm,
height=3.5cm,
scale only axis}
\begin{figure*}[bt]
\centering
    \subfloat[$\Psi = 0$ \label{fig:heatmap:0}]{
\begin{tikzpicture}

\begin{axis}[
tick align=outside,
tick pos=left,
xlabel={\(\displaystyle x\) [m]},
xmin=0, xmax=15,
xtick={0,2,4,6,8,10,12,14,16},
xticklabels={
  \(\displaystyle {0}\),
  \(\displaystyle {2}\),
  \(\displaystyle {4}\),
  \(\displaystyle {6}\),
  \(\displaystyle {8}\),
  \(\displaystyle {10}\),
  \(\displaystyle {12}\),
  \(\displaystyle {14}\),
  \(\displaystyle {16}\)
},
ylabel={\(\displaystyle y\) [m]},
ymin=0, ymax=15,
ytick={0,2,4,6,8,10,12,14,16},
yticklabels={
  \(\displaystyle {0}\),
  \(\displaystyle {2}\),
  \(\displaystyle {4}\),
  \(\displaystyle {6}\),
  \(\displaystyle {8}\),
  \(\displaystyle {10}\),
  \(\displaystyle {12}\),
  \(\displaystyle {14}\),
  \(\displaystyle {16}\)
}
]
\addplot graphics [includegraphics cmd=\pgfimage, xmin=0, xmax=15, ymin=0, ymax=15] {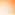};

\end{axis}

\end{tikzpicture}} \quad
    \subfloat[$\Psi = \pi/4$ \label{fig:heatmap:45}]{
\begin{tikzpicture}

\begin{axis}[
tick align=outside,
tick pos=left,
xlabel={\(\displaystyle x\) [m]},
xmin=0, xmax=15,
xtick style={color=black},
xtick={0,2,4,6,8,10,12,14,16},
xticklabels={
  \(\displaystyle {0}\),
  \(\displaystyle {2}\),
  \(\displaystyle {4}\),
  \(\displaystyle {6}\),
  \(\displaystyle {8}\),
  \(\displaystyle {10}\),
  \(\displaystyle {12}\),
  \(\displaystyle {14}\),
  \(\displaystyle {16}\)
},
ymin=0, ymax=15,
ytick style={color=black},
ytick={0,2,4,6,8,10,12,14,16},
yticklabels={
  \(\displaystyle {0}\),
  \(\displaystyle {2}\),
  \(\displaystyle {4}\),
  \(\displaystyle {6}\),
  \(\displaystyle {8}\),
  \(\displaystyle {10}\),
  \(\displaystyle {12}\),
  \(\displaystyle {14}\),
  \(\displaystyle {16}\)
}
]
\addplot graphics [includegraphics cmd=\pgfimage,xmin=0, xmax=15, ymin=0, ymax=15] {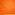};
\end{axis}

\end{tikzpicture}} \quad
    \subfloat[$\Psi = -\pi/4$\label{fig:heatmap:-45}]{
\begin{tikzpicture}

\begin{axis}[
colorbar,
colorbar style={
ytick pos=right,
ytick style={color=gray},
ytick={12.5,13,13.5,14,14.5,15,15.5,16,16.5},
yticklabels={
  \(\displaystyle {12.5}\),
  \(\displaystyle {13.0}\),
  \(\displaystyle {13.5}\),
  \(\displaystyle {14.0}\),
  \(\displaystyle {14.5}\),
  \(\displaystyle {15.0}\),
  \(\displaystyle {15.5}\),
  \(\displaystyle {16.0}\),
  \(\displaystyle {16.5}\)
},
ylabel={$P$ [dBm]}},
colormap/Oranges,
point meta max=16.4901176218056,
point meta min=12.7935849567194,
tick align=outside,
tick pos=left,
xlabel={\(\displaystyle x\) [m]},
xmin=0, xmax=15,
xtick={0,2,4,6,8,10,12,14,16},
xticklabels={
  \(\displaystyle {0}\),
  \(\displaystyle {2}\),
  \(\displaystyle {4}\),
  \(\displaystyle {6}\),
  \(\displaystyle {8}\),
  \(\displaystyle {10}\),
  \(\displaystyle {12}\),
  \(\displaystyle {14}\),
  \(\displaystyle {16}\)
},
ymin=0, ymax=15,
ytick={0,2,4,6,8,10,12,14,16},
yticklabels={
  \(\displaystyle {0}\),
  \(\displaystyle {2}\),
  \(\displaystyle {4}\),
  \(\displaystyle {6}\),
  \(\displaystyle {8}\),
  \(\displaystyle {10}\),
  \(\displaystyle {12}\),
  \(\displaystyle {14}\),
  \(\displaystyle {16}\)
}
]
\addplot graphics [includegraphics cmd=\pgfimage,xmin=0, xmax=15, ymin=0, ymax=15] {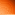};
\end{axis}

\end{tikzpicture}}
    \caption{Power consumption in the first quadrant of the area.}\vspace{-0.4cm}
    \label{fig:heatmap}
\end{figure*}

\setlength{\textfloatsep}{0pt}
\begin{algorithm}[t]
\caption{Power control optimization.}
\label{alg:power}
\footnotesize
 \SetKwFunction{Fn}{PowerControl}
  \Fn{$B,L,T,N,\sigma,\hat{\mb{x}}_u, \bm{\Sigma}_u, A_{\min}, \delta, \varepsilon, \nu$}
  {
  \\
  $\gamma_0\leftarrow 2^{\frac{L}{BT}}-1$\;\label{line:gamma0}
  $G_0\leftarrow$\Call{FadingICDF}{$\delta$}\;\label{line:fading}
  $A_0\leftarrow$\Call{ArrayFactor}{$\hat{\mb{x}}_u, \bm{\Sigma}_u, A_{\min}, \varepsilon, \nu$}\;
  \If{$A_0>0$}{
  $\mb{x}_c,a',b'\leftarrow$\Call{EllipseParameters}{$\hat{\mb{x}}_u,A_0$}\;
  $\hat{\beta}\leftarrow$\Call{PathLoss}{$\mb{x}_c + a' (\cos\hat{\varphi}, \sin\hat{\varphi}, 0)\T$}\;\label{line:loss}
  $P\leftarrow\frac{\sigma^2\gamma_0}{N^2 G_0  A_0 \hat{\beta}}$\;
   \Return{$A_{\ell}$}\;
   }
   \Else{
    \Return{$-1$};
   }
}
\end{algorithm}

First, the value of $\gamma_0$ is computed by applying~\eqref{eq:min_snr}, while $G_0$ is computed in line~\ref{line:fading}: as the overall fading distribution is determined by the product of two independent Rice fading gains, $|g_u g_b|^2$, the inverse \gls{cdf} cannot be computed analytically. However, fading parameters are relatively stable, and can be computed in advance by numerical methods and tabulated, so that the calculation just requires the retrieval of the correct value from a table. Finding the value of $A_0$ that ensures that $\mc{P}(\mb{x}_u\notin\mc{G}(A_0))<\varepsilon$ requires an iterative optimization, which can be performed by binary search. The precision parameter $\nu$ determines the number of iterations that the search will use, but since $A_0$ is directly proportional to the required power, we do not need a large number of iterations. Considering a minimum \gls{af} gain $A_{\min}=0.1$, which ensures that the illuminated points are still within the main lobe, just 5 iterations allow us to reach a maximum error of 2\% on the value of the array factor. If the positioning uncertainty is too large, and even $\mc{G}(A_{\min})$ is too small for $\varepsilon$, the \gls{urllc} transmission is impossible, and the \gls{bs} reports this to the application. Taking a pessimistic approach, we consider the minimum viable array factor within this precision, giving us a worst case increase of transmit power of 2\%. The full binary search procedure is reported in Algorithm~\ref{alg:gain_search}: the {\textsc{EllipseProbability}} function, first used in line~\ref{line:ellipse_prob} of the algorithm, simply computes the probability that the \gls{ue} will be inside the ellipse by computing the \gls{cdf} of the position distribution (see Fig.~\ref{fig:projection:scatter} for a visualization of the position realizations and $\mc{G}(A_0)$). While the integral of a bivariate Gaussian random variable over an arbitrary ellipse --which is required to get $\mc{P}(\mb{x}_u\in\mc{G}(A_0))$-- does not have a closed-form solution, it is a well-known numerical problems with several efficient and tabulated  solutions~\cite{didonato1961integration,groenewoud1967bivariate}.

\setlength{\textfloatsep}{0pt}
\begin{algorithm}[t]
\caption{Reliable AF gain binary search.}
\label{alg:gain_search}
\footnotesize
 \SetKwFunction{Fn}{ArrayFactor}
  \Fn{$\hat{\mb{x}}_u, \bm{\Sigma}_u, A_{\min}, N, \varepsilon, \nu$}{
  \\
  $A_h\leftarrow 1-\nu$\;
  $A_\ell\leftarrow A_{\min}$\;
  \If {\Call{EllipseProbability}{$\hat{\mb{x}}_u, \bm{\Sigma}_u, A_{\ell},N$}$<1-\varepsilon$}{\label{line:ellipse_prob}
   \Return{$-1$}\;
  }
  \If {\Call{EllipseProbability}{$\hat{\mb{x}}_u, \bm{\Sigma}_u, A_h,N$}$\geq1-\varepsilon$}{
   \Return{$1-\nu$}\;
  }
   \tcc{Binary search}
   \While{$A_h-A_{\ell}>\nu$}{
      $A_0\leftarrow(A_h+A_{\ell})/2$\;
      $e\leftarrow$\Call{EllipseProbability}{$\hat{\mb{x}}_u, \bm{\Sigma}_u, A_0,N$}\;
      \If{$e<1-\varepsilon$}{
         $A_h\leftarrow A_0$\;
      }
      \Else{
        $A_{\ell}\leftarrow A_0$\;
      }
   }
   \Return{$A_{\ell}$}\;
}
\end{algorithm}

Finally, since $\mc{G}(A_0)$ is approximated by the ellipse computed in the previous section, the maximum attenuation given by the path loss in the region is 
\begin{equation}
\min_{\mb{x}\in\mc{G}(A_0)} \beta(\mb{x}) = \beta(\mb{x}_c + a' (\cos\hat{\varphi}, \sin\hat{\varphi}, 0)\T).
\end{equation}

As all numerical steps in the optimization can be tabulated and computed efficiently, and the number of iterations of the binary search is extremely limited; the procedure for power control based on the upper bound can be computed within the \gls{urllc} time constraints, even if we consider the limits of embedded processors that can be installed in a \gls{bs}.

\section{Numerical Results}\label{sec:results}
In this section, we present our numerical results. 
The parameters we used are listed in Table~\ref{tab:params}, and may refer to a typical \gls{urllc} scenario. The values of $\delta$ and $\varepsilon$ have been found empirically. We highlight that the deadline to transmit the packet is $0.5$~ms, which is stricter than the typical \gls{urllc} deadline to allow some time for \gls{ris} configuration and power optimization: however, this is still not enough to perform the extensive Monte Carlo simulations that would be required to compute the optimal power. 
The covariance matrix of the \gls{ue}'s position uncertainty is
\begin{equation} \label{eq:Sigma}
    \mb{\Sigma} = \sigma^2_u 
    \begin{bmatrix}
    \frac{1}{\cos^2\Psi} & \sin\Psi \\
    \sin\Psi & \frac{1}{\cos^2\Psi} \\
    \end{bmatrix}
\end{equation}
where $\sigma_u = 0.3$ m and $\Psi \in \{0, \pi/4, -\pi/4\}$. Eq.~\eqref{eq:Sigma} is used to capture three different user behaviors: when $\Psi = 0$, the uncertainty is a circularly symmetric Gaussian representing the error when the user is static; when $\Psi = \pi/4$ or $-\pi/4$, the major axis of the equi-probability ellipse is oriented toward $\Psi$, emulating the output error of common tracking filters when the \gls{ue} is moving in direction $\Psi$~\cite{li2020toward}. The worst-case scenario is when $\Psi=\pi/2-\hat{\varphi}$, as the highest positioning error is aligned with the minor axis of the projected beam ellipse.

\setlength{\textfloatsep}{15pt}
\begin{table}[t]
    \centering
    \caption{Simulation parameters.}\vspace{-0.2cm}
    \footnotesize
    {\renewcommand{\arraystretch}{1}
    \begin{tabular}{@{}lcc@{}}
    \toprule
    Parameter & Symbol & Value \\ \midrule
    \multicolumn{3}{c}{\textbf{Scenario}} \\ \midrule
    Room side & $D$ & $15$~m\\ 
    Ceiling height &$h$ & $25$~m\\
    BS position & $\mb{x}_b$ & $(-5, -5, 5)\T$~m\\
    RIS element spacing & $d$ & $\lambda / 2$\\
    Number of RIS elements & $N$ & 100\\
    Positioning error deviation & $\sigma_u$ & $0.3$~m\\ 
    \midrule
     \multicolumn{3}{c}{\textbf{Communication system}}\\ \midrule
     Packet length & $L$ & $32$~bytes\\
     Wavelength & $\lambda$ & 0.333 m\\
     Bandwidth & $B$ & $360$~kHz\\
     Latency constraint & $T$ & $0.5$~ms\\
     Reliability & $p_s$ & $99.999$\% \\
     Fading shape parameter & $K$ & $6$~dB\\
     \gls{ue} and \gls{bs} antenna gains &$G_b \cdot G_u$ & $12.85$~dB\\
     Reference distance &$d_0$ & $1$~m\\
     Path loss exponent &$\xi$ &$2$\\
     Reference path gain & $\beta_0$ & $-31.53$~dB\\
     Algorithm parameters & $[\delta, \varepsilon]$ & $[0.9, 0.1] \cdot (1 - p_s)$ \\
     \bottomrule
    \end{tabular}\vspace{-0.4cm}}
    \label{tab:params}
\end{table}

Fig.~\ref{fig:heatmap} shows a heatmap of power consumption for the three values of $\Psi$: we can easily notice that power consumption is generally lower for $\Psi=0$, and that the path loss is still the most important component: points farther away from the origin generally require a higher power. However, the increase is slower than the path loss, as the effect of the positioning error decreases: as $\hat{\theta}$ increases, the projection of the beam on the floor becomes larger, so the same positioning error distribution is covered by an ellipse associated to a higher \gls{af} gain. In the asymmetric error cases, shown in Fig.~\ref{fig:heatmap:45}-\subref*{fig:heatmap:-45}, a the effect of $\hat{\varphi}$ is significant: if $\Psi=\pi/4$, the required power is minimal when $\hat{\varphi}=\Psi$, i.e., when the projected beam and the positioning error are aligned, and increases as the two ellipses rotate relative to each other. The opposite happens if $\Psi=-\pi/4$, as required power is maximal when the highest position error is orthogonal to the projected beam. Interestingly, the case with $\Psi=\pi/4$ also requires less power when the \gls{ue} is farther away: if $\varphi$ is close to $\Psi$, the increased eccentricity of the projection beam better matches the shape of the position distribution, improving the reliable \gls{af} gain $A_0$ enough to offset the increased path loss. Naturally, the opposite happens if $\Psi=-\pi/4$.

\pgfplotsset{
width=6.3cm,
height=3cm,
scale only axis}
\begin{figure*}[bt]
    \centering
    \subfloat[Transmission power. \label{fig:power}]{\input{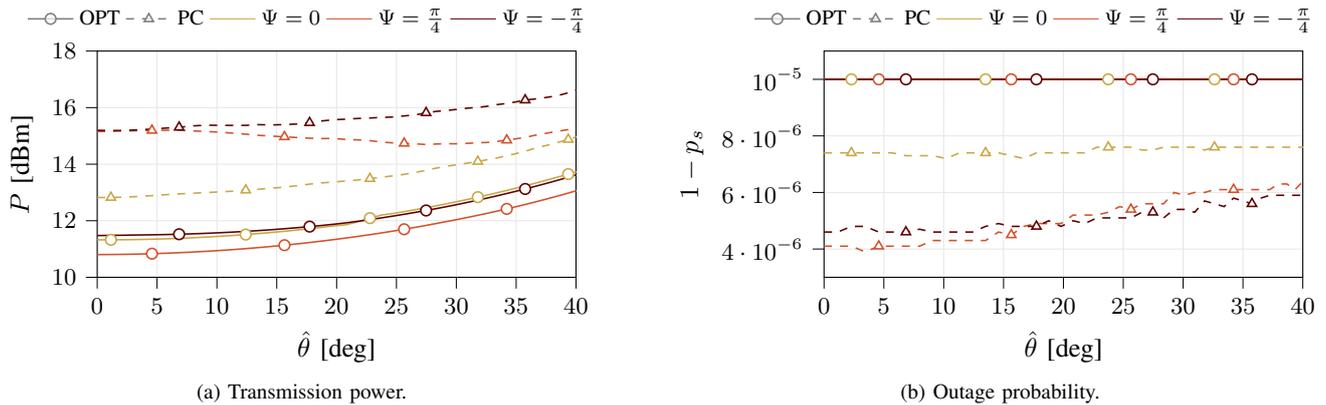}}$\qquad$
    \subfloat[Outage probability. \label{fig:outage}]{
\begin{tikzpicture}

\begin{axis}[
scale only axis,
legend cell align={left},
legend style={
  at={(1.05,1.05)},
  anchor=south east,
  draw=white,
},
legend columns=-1,
log basis y={10},
tick align=outside,
tick pos=left,
xlabel={\(\displaystyle \hat{\theta}\) [deg]},
xmajorgrids,
xmin=0, xmax=40,
xtick style={color=black},
ylabel={\(\displaystyle 1 - p_s\)},
ymajorgrids,
ymin=3e-06, ymax=1.1e-05,
ytick scale label code/.code={},
ytick={2e-06,4e-06,6e-6,8e-6,1e-05},
yticklabels={
  \(\displaystyle {2\cdot10^{-6}}\),
  \(\displaystyle {4\cdot10^{-6}}\),
  \(\displaystyle {6\cdot10^{-6}}\),
  \(\displaystyle {8\cdot 10^{-6}}\),
  \(\displaystyle {10^{-5}}\),
},
xtick={-5,0,5,10,15,20,25,30,35,40,45},
xticklabels={
  \(\displaystyle {\ensuremath{-}5}\),
  \(\displaystyle {0}\),
  \(\displaystyle {5}\),
  \(\displaystyle {10}\),
  \(\displaystyle {15}\),
  \(\displaystyle {20}\),
  \(\displaystyle {25}\),
  \(\displaystyle {30}\),
  \(\displaystyle {35}\),
  \(\displaystyle {40}\),
  \(\displaystyle {45}\)
},
]
\addplot [semithick, color1, mark=*, mark size=2, mark options={solid, fill=white}, mark repeat=10, mark phase=3, forget plot]
table {%
0 1e-05
1.14576283817483 1e-05
2.29061004263859 1e-05
3.43363036245061 1e-05
4.5739212599009 1e-05
5.71059313749963 1e-05
6.84277341263097 1e-05
7.96961039432135 1e-05
9.09027692082235 1e-05
10.2039737217317 1e-05
11.3099324740202 1e-05
12.4074185274007 1e-05
13.4957332807958 1e-05
14.5742161980387 1e-05
15.6422464572087 1e-05
16.6992442339936 1e-05
17.7446716250569 1e-05
18.7780332224455 1e-05
19.7988763545249 1e-05
20.8067910127112 1e-05
21.8014094863518 1e-05
22.7824057304817 1e-05
23.7494944928668 1e-05
24.7024302277713 1e-05
25.6410058243053 1e-05
26.565051177078 1e-05
27.4744316262771 1e-05
28.3690462932786 1e-05
29.248826336547 1e-05
30.1137331509824 1e-05
30.9637565320735 1e-05
31.7989128242944 1e-05
32.6192430711928 1e-05
33.4248111826038 1e-05
34.2157021324374 1e-05
34.9920201985587 1e-05
35.7538872544367 1e-05
36.5014411205063 1e-05
37.2348339815747 1e-05
37.9542308751325 1e-05
38.6598082540901 1e-05
39.3517526262647 1e-05
40.0302592718897 1e-05
};

\addplot [semithick, color1, dashed, mark=triangle*, mark size=2, mark options={solid, fill=white}, mark repeat=10, mark phase=3, forget plot]
table {%
0 7.4e-06
1.14576283817483 7.4e-06
2.29061004263859 7.4e-06
3.43363036245061 7.4e-06
4.5739212599009 7.4e-06
5.71059313749963 7.4e-06
6.84277341263097 7.3e-06
7.96961039432135 7.3e-06
9.09027692082235 7.3e-06
10.2039737217317 7.2e-06
11.3099324740202 7.4e-06
12.4074185274007 7.4e-06
13.4957332807958 7.4e-06
14.5742161980387 7.4e-06
15.6422464572087 7.3e-06
16.6992442339936 7.2e-06
17.7446716250569 7.4e-06
18.7780332224455 7.4e-06
19.7988763545249 7.4e-06
20.8067910127112 7.4e-06
21.8014094863518 7.4e-06
22.7824057304817 7.6e-06
23.7494944928668 7.6e-06
24.7024302277713 7.6e-06
25.6410058243053 7.6e-06
26.565051177078 7.5e-06
27.4744316262771 7.5e-06
28.3690462932786 7.6e-06
29.248826336547 7.6e-06
30.1137331509824 7.6e-06
30.9637565320735 7.6e-06
31.7989128242944 7.5e-06
32.6192430711928 7.6e-06
33.4248111826038 7.6e-06
34.2157021324374 7.6e-06
34.9920201985587 7.6e-06
35.7538872544367 7.6e-06
36.5014411205063 7.6e-06
37.2348339815747 7.6e-06
37.9542308751325 7.6e-06
38.6598082540901 7.6e-06
39.3517526262647 7.6e-06
40.0302592718897 7.6e-06
};
\addplot [semithick, color2, mark=*, mark size=2, mark options={solid, fill=white}, mark repeat=10, mark phase=5, forget plot]
table {%
0 1e-05
1.14576283817483 1e-05
2.29061004263859 1e-05
3.43363036245061 1e-05
4.5739212599009 1e-05
5.71059313749963 1e-05
6.84277341263097 1e-05
7.96961039432135 1e-05
9.09027692082235 1e-05
10.2039737217317 1e-05
11.3099324740202 1e-05
12.4074185274007 1e-05
13.4957332807958 1e-05
14.5742161980387 1e-05
15.6422464572087 1e-05
16.6992442339936 1e-05
17.7446716250569 1e-05
18.7780332224455 1e-05
19.7988763545249 1e-05
20.8067910127112 1e-05
21.8014094863518 1e-05
22.7824057304817 1e-05
23.7494944928668 1e-05
24.7024302277713 1e-05
25.6410058243053 1e-05
26.565051177078 1e-05
27.4744316262771 1e-05
28.3690462932786 1e-05
29.248826336547 1e-05
30.1137331509824 1e-05
30.9637565320735 1e-05
31.7989128242944 1e-05
32.6192430711928 1e-05
33.4248111826038 1e-05
34.2157021324374 1e-05
34.9920201985587 1e-05
35.7538872544367 1e-05
36.5014411205063 1e-05
37.2348339815747 1e-05
37.9542308751325 1e-05
38.6598082540901 1e-05
39.3517526262647 1e-05
40.0302592718897 1e-05
};
\addplot [semithick, color2, dashed, mark=triangle*, mark size=2, mark options={solid, fill=white}, mark repeat=10,mark phase=5, forget plot]
table {%
0 4.1e-06
1.14576283817483 4.1e-06
2.29061004263859 4.1e-06
3.43363036245061 3.9e-06
4.5739212599009 4.1e-06
5.71059313749963 4.1e-06
6.84277341263097 4.1e-06
7.96961039432135 4.1e-06
9.09027692082235 4.3e-06
10.2039737217317 4.3e-06
11.3099324740202 4.3e-06
12.4074185274007 4.3e-06
13.4957332807958 4.3e-06
14.5742161980387 4.6e-06
15.6422464572087 4.5e-06
16.6992442339936 4.8e-06
17.7446716250569 4.9e-06
18.7780332224455 4.9e-06
19.7988763545249 4.9e-06
20.8067910127112 5.2e-06
21.8014094863518 5.2e-06
22.7824057304817 5.2e-06
23.7494944928668 5.3e-06
24.7024302277713 5.5e-06
25.6410058243053 5.4e-06
26.565051177078 5.6e-06
27.4744316262771 5.6e-06
28.3690462932786 5.6e-06
29.248826336547 6e-06
30.1137331509824 5.9e-06
30.9637565320735 6e-06
31.7989128242944 6e-06
32.6192430711928 6.1e-06
33.4248111826038 6.1e-06
34.2157021324374 6.1e-06
34.9920201985587 6.1e-06
35.7538872544367 6.1e-06
36.5014411205063 6.1e-06
37.2348339815747 6.1e-06
37.9542308751325 6.3e-06
38.6598082540901 6.3e-06
39.3517526262647 6.1e-06
40.0302592718897 6.4e-06
};
\addplot [semithick, color3, mark=*, mark size=2, mark options={solid, fill=white}, mark repeat=10,mark phase=7, forget plot]
table {%
0 1e-05
1.14576283817483 1e-05
2.29061004263859 1e-05
3.43363036245061 1e-05
4.5739212599009 1e-05
5.71059313749963 1e-05
6.84277341263097 1e-05
7.96961039432135 1e-05
9.09027692082235 1e-05
10.2039737217317 1e-05
11.3099324740202 1e-05
12.4074185274007 1e-05
13.4957332807958 1e-05
14.5742161980387 1e-05
15.6422464572087 1e-05
16.6992442339936 1e-05
17.7446716250569 1e-05
18.7780332224455 1e-05
19.7988763545249 1e-05
20.8067910127112 1e-05
21.8014094863518 1e-05
22.7824057304817 1e-05
23.7494944928668 1e-05
24.7024302277713 1e-05
25.6410058243053 1e-05
26.565051177078 1e-05
27.4744316262771 1e-05
28.3690462932786 1e-05
29.248826336547 1e-05
30.1137331509824 1e-05
30.9637565320735 1e-05
31.7989128242944 1e-05
32.6192430711928 1e-05
33.4248111826038 1e-05
34.2157021324374 1e-05
34.9920201985587 1e-05
35.7538872544367 1e-05
36.5014411205063 1e-05
37.2348339815747 1e-05
37.9542308751325 1e-05
38.6598082540901 1e-05
39.3517526262647 1e-05
40.0302592718897 1e-05
};
\addplot [semithick, color3, dashed, mark=triangle*, mark size=2, mark options={solid, fill=white}, mark repeat=10, mark phase=7, forget plot]
table {%
0 4.6e-06
1.14576283817483 4.6e-06
2.29061004263859 4.8e-06
3.43363036245061 4.8e-06
4.5739212599009 4.6e-06
5.71059313749963 4.6e-06
6.84277341263097 4.6e-06
7.96961039432135 4.7e-06
9.09027692082235 4.6e-06
10.2039737217317 4.6e-06
11.3099324740202 4.6e-06
12.4074185274007 4.6e-06
13.4957332807958 4.8e-06
14.5742161980387 4.9e-06
15.6422464572087 4.8e-06
16.6992442339936 4.8e-06
17.7446716250569 4.8e-06
18.7780332224455 5e-06
19.7988763545249 4.8e-06
20.8067910127112 5e-06
21.8014094863518 4.9e-06
22.7824057304817 5.1e-06
23.7494944928668 5.1e-06
24.7024302277713 5.1e-06
25.6410058243053 5.1e-06
26.565051177078 5.3e-06
27.4744316262771 5.3e-06
28.3690462932786 5.1e-06
29.248826336547 5.4e-06
30.1137331509824 5.4e-06
30.9637565320735 5.3e-06
31.7989128242944 5.7e-06
32.6192430711928 5.6e-06
33.4248111826038 5.5e-06
34.2157021324374 5.8e-06
34.9920201985587 5.7e-06
35.7538872544367 5.6e-06
36.5014411205063 5.8e-06
37.2348339815747 5.9e-06
37.9542308751325 5.9e-06
38.6598082540901 5.9e-06
39.3517526262647 5.9e-06
40.0302592718897 5.9e-06
};

\addlegendimage{semithick, gray, mark=*, mark options={solid, fill=white}}
\addlegendentry{OPT}

\addlegendimage{semithick, gray, dashed, mark=triangle*, mark options={solid, fill=white}}
\addlegendentry{PC}

\addlegendimage{color1}
\addlegendentry{$\Psi = 0$}
\addlegendimage{color2}
\addlegendentry{$\Psi = \frac{\pi}{4}$}
\addlegendimage{color3}
\addlegendentry{$\Psi = -\frac{\pi}{4}$}

\end{axis}
\end{tikzpicture}}
    \caption{Performance as a function of $\hat{\theta}$ with $\hat{\varphi} = \pi/4$.}\vspace{-0.4cm}
    \label{fig:line_results}
\end{figure*}


In order to provide a meaningful comparison, we consider a fixed azimuth angle $\hat{\varphi} = \pi/4$ and consider performance as a function of $\hat{\theta}\in [0,40^{\circ}]$, evaluating both the outage probability and power consumption. As a term of comparison, we show the results of an ideal oracle approach obtained by Monte Carlo simulations, referred as OPT in the following. For this solution, the empirical \gls{cdf} of $\gamma / P$ is estimated over $10^7$ realizations of the \gls{ue}'s position and fading, and inverted to find the power that gives exactly $\mc{P}(\gamma \le \gamma_0) = 1 - p_s$. Fig.~\ref{fig:line_results} shows the comparison, where the proposed practical power control solution is denoted as PC.

Fig.~\ref{fig:power} shows that the proposed solution slightly overestimates the required power, particularly in the cases with $\Psi=\pm\pi/4$: the optimality gap is between 1.5 and 4.5~dB, as the upper bound to the outage probability is relatively tight, but still leaves some slack. Fig.~\ref{fig:outage} confirms the conservative nature of the proposed solution: the outage probability, which should be close to $10^{-5}$, is between 25 and 60\% lower. It is worth noting that OPT always has an outage probability exactly equal to $10^{-5}$, as it obtained by the straightforward inversion of the empirical \gls{cdf} of $\gamma / P$, while, as we remarked above, PC provides the closest performance when the \gls{ue} is static, ($\Psi = 0$), and the worst performance when the direction of movement is perpendicular to $\hat{\varphi}$ ($\Psi = -\pi/4$). As we noted above, the increased eccentricity of $\mc{G}(A_0)$ as the elevation angle grows has a positive effect if $\Psi=\hat{\varphi}$, and a negative effect in the other two cases. When $\hat{\theta}\simeq0$, $\mc{G}(A_0)$ is almost circular: asymmetric uncertainty distributions are likelier to generate positions outside the illuminated area. When pointing toward the sides, $\mc{G}(A_0)$ is eccentric in the $\hat{\varphi}$ direction, and an asymmetric positioning error in the same direction is beneficial (see also Fig.~\ref{fig:projection:scatter} for the $\Psi = -\pi/4$ case).

\section{Conclusion and Discussion}\label{sec:conc}

In this work, we present a method for computationally efficient power control in \gls{ris}-assisted \gls{urllc}, based on an upper bound to the outage probability which includes position uncertainty. The results show that the optimality gap is relatively small, as the method always meets \gls{urllc} requirements with a slightly higher power than the optimum.

Future work can focus on further refining the bounds, as well as considering different localization methods, which may even rely on the \gls{ris} itself. Furthermore, an analysis of the calibration process, which the \gls{bs} uses to learn the locations in which the controllable path is dominating the uncontrollable paths, could be another extension of the paper. The optimization of the sweeping process that gauges the impact of the controllable path in each position in a dynamic environment is an interesting development.

\bibliographystyle{IEEEtran}
\bibliography{positioning.bib}

\end{document}